\newtheorem{theorem}{Theorem}
\newtheorem*{wp_conjecture}{Weak Parity Conjecture~\cite{AABB}}
\newtheorem*{cfgs_conjecture}{Hypercube Induced-Degree Conjecture}
\newtheorem{conjecture}[theorem]{Conjecture}
\newcommand{\bE}{\mathbb{E}}
\title{A Note on a Communication Game}
\author{Andrew Drucker\thanks{University of Chicago, Computer Science Department. Email: andy.drucker@gmail.com.  This research was conducted at the Institute for Advanced Study and was supported by the NSF under agreements Princeton University Prime Award No. CCF-0832797 and Sub-contract No. 00001583.}} %\thanks{U. Chicago}
\date{}                                           % Activate to display a given date or no date
\begin{document}
\maketitle

\begin{abstract}
We describe a communication game, and a conjecture about this game, whose proof would imply the well-known Sensitivity Conjecture asserting a polynomial relation between sensitivity and block sensitivity for Boolean functions.  The author defined this game and observed the connection in Dec. 2013 - Jan. 2014.  The game and connection were independently discovered by Gilmer, Kouck\'{y}, and Saks, who also established further results about the game (not proved by us) and published their results in ITCS `15~\cite{GKS_itcs}.  

This note records our independent work, including some observations that did not appear in~\cite{GKS_itcs}. Namely, the main conjecture about this communication game would imply not only the Sensitivity Conjecture, but also a stronger hypothesis raised by Chung, F\"{u}redi, Graham, and Seymour~\cite{CFGS}; and, another related conjecture we pose about a ``query-bounded'' variant of our communication game would suffice to answer a question of Aaronson, Ambainis, Balovis, and Bavarian~\cite{AABB} about the query complexity of the ``Weak Parity'' problem---a question whose resolution was previously shown by~\cite{AABB} to follow from a proof of the Chung et al. hypothesis.
\end{abstract}

\section{The Carmen Sandiego game}

The work described here\footnote{The ideas of the present paper were formulated in Dec. 2013 - Jan. 2014, and discussed (in varying detail) in early 2014 with Scott Aaronson, Laci Babai, Sasha Razborov, and Avi Wigderson.} concerns the following ``game'' involving three parties: Alice, Bob, and Carmen.

Carmen Sandiego is a notorious globe-trotting criminal (created by Br{\o}derbund Software, and now owned by The Learning Company) whose misdeeds inadvertently raise awareness of geography and world cultures.  In her latest caper, Carmen makes a single visit to each of $n > 1$ world cities $c_1, \ldots, c_n$, in some order of her choosing.  Alice and Bob are two sleuths cooperating to catch Carmen; they know the identities of the $n$ cities but not the order in which they will be visited.  

Bob is hot on the trail and follows Carmen directly from city to city but, due to severe funding restrictions, can't communicate directly with Alice as he travels.  Instead, he is only able to leave a single 0/1 ``clue-bit'' behind in each city, based on what he has observed of Carmen's travels so far.  (He chooses his clue-bits based upon some deterministic algorithm or ``strategy'' STRAT$_B$ agreed upon with Alice in advance.)  To make matters worse, when Bob follows Carmen to the final city visited on her tour (let it be $c_{j}$), she captures him and plants the clue-bit for this city herself, with an intention to confuse.  Carmen then ``hides out'' with her captive somewhere in $c_j$.

Afterwards Alice, having witnessed none of these events, visits all $n$ cities and observes the clue-bit $b_i$ left in each city $c_i$.  Alice then prepares a list of ``suspect'' cities $S \subseteq [n]$ consisting of all cities that could potentially be Carmen's final hiding place, based on Alice's view of $b = (b_1, \ldots, b_n)$ and her knowledge of the algorithm followed by Bob.

The ``cost'' of this outcome is defined as the set size $|S|$---indicating the number of cities that might have to be thoroughly searched to uncover and arrest Carmen.  The ``complexity'' of STRAT$_B$ is defined as the maximum cost over all possible outcomes, ranging over possible actions by Carmen.

The above game was independently defined and studied by Gilmer, Kouck\'{y}, and Saks~\cite{GKS_itcs}, who also established further results about the game (not proved by us) and published their results in ITCS `15.  (The authors of~\cite{GKS_itcs} describe this game with inessential narrative differences---no ``Carmen'' is present, and their Alice and Bob play different roles.  We choose to follow our original description.)

This author conjectured that any STRAT$_B$ used by Bob has complexity at least $\alpha \cdot n^\beta$, for some absolute constants $\alpha, \beta > 0$.  This still seems plausible, although as we will discuss,~\cite{GKS_itcs} have disproved some stronger versions of this statement; their results also call the original statement's likelihood into question.  We also note that Szegedy~\cite{Sze15} has shown a Bob-strategy with complexity $O(n^{0.4732})$, improving on a $.8\sqrt{n}$ upper bound achieved in~\cite{GKS_itcs}.

\paragraph{A more formal setup:} We will give definitions and then explain them.  Fix $n > 1$ as before, and let $S_n$ be the set of all permutations $\pi = (\pi(1), \ldots, \pi(n))$ over $[n]$.

  A \emph{Bob-strategy} STRAT$_B$ is defined as a family of functions $F_t: S_n \rightarrow \{0, 1\}$ for $t = 1, 2, \ldots, n-1$.  We always require that each $F_t$ is \emph{$t$-restricted}, meaning that $F_t(\pi)$ is a function of the first $t$ values $\pi(1), \ldots, \pi(t)$.  

With reference to a fixed Bob-strategy STRAT$_B$, for a permutation $\pi \in S_n$ and a $z \in \{0, 1\}$ we define $b(\pi, z) \in \{0, 1\}$ as the string whose $j^{th}$ coordinate is given by
$$
b_j \ := \
\begin{cases}
F_t(\pi), \ \text{if } j = \pi(t) \text{ for some $t < n$,} \\
z,  \ \text{\ \ \ \ \ \ if } j = \pi(n) .\\
\end{cases}
$$
For $b \in \{0, 1\}^n$, we define the set $S = S(b) \subseteq [n]$ of ``suspect cities'' as 
\[ S \ := \ \{i: \exists \rho \in S_n, u \in \{0, 1\} \text{ such that } b(\rho, u) = b \} \ . \] 
The \emph{cost} of strategy STRAT$_B$ on pair $(\pi, z)$ is defined as $| S(b^\star)| $, where $b^\star := b(\pi, z)$.
The \emph{complexity}, $\mathrm{compl}($STRAT$_B)$, is defined as the maximum cost over all pairs $(\pi, z)$ as above.  The complexity of the Carmen Sandiego communication game for $n$ cities, denoted $\mathrm{compl}(\mathcal{G}_{CS, n})$, is defined as
\[   \mathrm{compl}(\mathcal{G}_{CS, n}) \ := \ \min\ \{ \mathrm{compl}(\mathrm{STRAT}_B)\} \ ,  \]
with the minimum ranging over all Bob-strategies STRAT$_B$ for $n$ cities.

The intended ``interpretation'' of these definitions is as follows: 

\begin{itemize}
\item We interpret each permutation $\pi$ as a possible itinerary for Carmen, where $\pi(i)$ indicates the index of the $i^{th}$ city visited.

\item The function $F_t$ tells Bob which clue-bit to leave at the $t^{th}$ city on his travels, the city $c_{\pi(t)}$; this clue must be determined based only on what he has seen of Carmen's itinerary so far---that is, based on the values $\pi(1), \ldots, \pi(t)$.  As Bob is captured before he can choose a final clue, his $n^{th}$-step rule becomes irrelevant and is omitted.

\item The bit $b_j$ is the clue that Alice finds in the $j^{th}$ city $c_j$, if Carmen followed the itinerary $\pi$ and if she chose $z$ as the ``confusing'' clue to put in her final hideout-city $c_{\pi(n)}$.

\end{itemize}

\section{Relation to a question on the hypercube}  

Next we recall the basic notion of induced subgraphs.  If $G = (V, E)$ is an undirected graph and $K \subseteq V$, the \emph{induced subgraph} $G[K]$ is defined as the graph that has vertex set $K$ and edge set $E' := \{(u, v) \in E: u , v \in K\}$.  The next conjecture considers the case where $G = H_n = \{0, 1\}^n$ is the Boolean hypercube, whose vertices $x, x'$ are adjacent iff $x, x'$ differ in exactly one coordinate. 
%If $K' \subseteq K$, the set $\Gamma_K(K')$ is the vertex neighborhood of $K'$ within $G[K]$; that is, $\Gamma_K(K') = \{v \in K \setminus K': \exists u \in K'$ such that $(u, v) \in E\}$.    It claims that if $K$ contains more than half the vertices of this cube, there exists a large

Chung, F\"{u}redi, Graham, and Seymour~\cite{CFGS} raised the question of whether the following conjecture holds:

\begin{cfgs_conjecture}\label{cfgs-conj} There are absolute constants $a, b > 0$ such that the following holds for all $n > 0$: if $K \subseteq H_n$ is any set of size $|K| > 2^{n-1}$, then there is a $y \in K$ whose degree within $H_n[K]$ is at least $a \cdot n^b$.
\end{cfgs_conjecture}

The Hypercube Induced-Degree Conjecture was shown by Gotsman and Linial~\cite{GL92} to imply the well-known Sensitivity Conjecture, which asserts that two measures of Boolean function complexity, the \emph{sensitivity} $s(f)$ and \emph{block sensitivity} $bs(f)$ (the latter defined by Nisan in~\cite{Nis91}), are always within a polynomial factor of each other.  The question of the Sensitivity Conjecture was raised in the conference version of Nisan's paper~\cite{Nis89}.   For background and variations on the Sensitivity Conjecture (including several equivalent conjectures) we recommend the survey of Hatami, Kulkarni, and Pankratov~\cite{HKP11}.
%\section{}
%\subsection{}

We prove:

\begin{theorem}\label{thm:cs_to_cfgs} Suppose that $K \subseteq H_n$ is of size $|K| > 2^{n-1}$ and yet every $y \in K$ has degree at most $D > 0$ within $H_n[K]$.  Then, $\mathrm{compl}(\mathcal{G}_{CS, n}) \leq D$.
\end{theorem}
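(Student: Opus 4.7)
The plan is to construct a Bob-strategy of complexity at most $D$ directly from $K$, aiming for the following target property: for every permutation $\pi$, both strings obtained by extending Bob's placed bits on $[n]\setminus\{\pi(n)\}$ with $0$ and with $1$ at coordinate $\pi(n)$ lie in $K$. Granted this, $b^{\star}=b(\pi,z)$ always lies in $K$ regardless of Carmen's bit $z$. Moreover, for any $i\in S(b^{\star})$ witnessed by an alternative itinerary $\rho$ ending at $i$, applying the same property to $\rho$ forces both extensions at coordinate $i$ into $K$; since Bob's bits under $\rho$ agree with $b^{\star}$ on $[n]\setminus\{i\}$, those two extensions are precisely $b^{\star}$ and $b^{\star}\oplus e_i$. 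Hence $b^{\star}\oplus e_i\in K$, and $S(b^{\star})\subseteq\{i:b^{\star}\oplus e_i\in K\}$ is a set of size at most $\deg_{H_n[K]}(b^{\star})\leq D$.

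The strategy arises from a greedy density rule. Let $P_t\subseteq\{0,1\}^n$ denote the $(n-t+1)$-dimensional sub-cube of strings consistent with Bob's first $t-1$ placements. At step $t$, Bob chooses the bit $c$ at coordinate $\pi(t)$ that maximizes $|K\cap P_{t+1}|$, where $P_{t+1}=P_t\cap\{y_{\pi(t)}=c\}$, breaking ties arbitrarily. The invariant to maintain is $|K\cap P_t|>2^{n-t}$: it holds at $t=1$ because $|K\cap P_1|=|K|>2^{n-1}$, and inductively the two choices of $c$ partition $K\cap P_t$, so the maximizing choice yields $|K\cap P_{t+1}|\geq |K\cap P_t|/2>2^{n-t-1}$, preserving it.

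At the end of the game ($t=n$), $P_n$ is the one-dimensional sub-cube consisting of exactly the two strings differing at coordinate $\pi(n)$, and the invariant forces $|K\cap P_n|>1$, so both vertices of this edge are in $K$. This is the target property, and the first paragraph's analysis then yields $|S(b^{\star})|\leq D$. The only nontrivial step is the averaging argument for maintaining the invariant; the induced-degree hypothesis on $K$ enters only at the very end, when one bounds $|\{i:b^{\star}\oplus e_i\in K\}|$ by the degree of $b^{\star}\in K$ in $H_n[K]$.
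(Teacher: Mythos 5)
Your proof is correct and follows essentially the same route as the paper: a greedy bit placement that maximizes $|K\cap P_{t+1}|$ at each step, the halving invariant $|K\cap P_t|>2^{n-t}$ (the paper phrases this as $|K(w^t)|>2^{n-1-t}$ for its partial string $w^t$, which is the same statement after the index shift $P_t\leftrightarrow w^{t-1}$), the conclusion that the final 1-dimensional subcube lies entirely inside $K$, and the injection from suspect cities into neighbors of $b^\star$ in $H_n[K]$. The only difference is organizational---you state the target property up front and verify the strategy achieves it afterward, whereas the paper builds the strategy first and then analyzes it.
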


As an immediate consequence, we see that if $\mathrm{compl}(\mathcal{G}_{CS, n})\geq n^{\Omega(1)}$ then the Hypercube Induced-Degree Conjecture is true.  The authors of~\cite{GKS_itcs} showed (independently) that $\mathrm{compl}(\mathcal{G}_{CS, n})\geq n^{\Omega(1)}$ implies the Sensitivity Conjecture.  Their proof technique is different and involves analyzing representations of Boolean functions as real polynomials.

\begin{proof}[Proof of Theorem~\ref{thm:cs_to_cfgs}] We will define a Bob-strategy STRAT$_B = (F_1, \ldots, F_{n-1})$ based on the fixed set $K$ from our starting assumption.  First, for any $w \in \{0, 1, *\}^n$, we will use
\[ K(w) \ := \ \{y \in K: \text{for every $i$ such that $w_i \in \{0, 1\}$, we have $y_i = w_i$}\}  \]
to denote the set of strings in $K$ ``agreeing with'' $w$.
 For $t \geq 0$ we will let $w^t = w^t(\pi) \in \{0, 1, *\}^n$ denote the vector of clue-bits left by Bob in $c_1, \ldots, c_n$ after $t$ steps of Carmen's itinerary $\pi$; here, we take $w^t_\ell = *$ if Carmen has not visited $c_i$ in the first $t$ steps, that is, if $\pi^{-1}(\ell) > t$.  Thus $w^0 = *^n$.

For $w = (w_1, \ldots, w_n) \in \{0, 1, *\}^n$ and $u \in \{0, 1\}$ we define $w[i \leftarrow u]$ as $w$ with the $i^{th}$ coordinate set to $u$.  We define $F_t$ inductively for $t \geq 1$ by taking
\[   F_t(\pi) \ = \ F_t(\pi(1), \ldots, \pi(t)) \ := \ \arg \max_{u \in \{0, 1\}} \left|  K\left(w^{t - 1}[\pi(t) \leftarrow u]\right)  \right|  \ .  \]
That is, we set the $t^{th}$ clue-bit (placed at position $\pi(t)$) in such a way as to maximize the number of strings in $K$ agreeing with $w^t = w^{t - 1}[\pi(t) \leftarrow u]$, subject to the inductive setting of the previously chosen clue-bits (which determine $w^{t -1}$).  Any ties above are broken by an arbitrary fixed rule---in favor of $u = 0$, for concreteness.

This Bob-strategy clearly has the ``$t$-restricted'' property we require for each $t \leq n-1$.  Let us analyze its behavior.  Fixing a Carmen itinerary $\pi \in S_n$ and a final bit $z \in \{0, 1\}$ determines the strings $w^0, \ldots, w^n$.  We have $|K(w^0)| = |K| > 2^{n-1}$ and we claim that $|K(w^t)| > 2^{n - 1 - t}$ for each $t \in [n-1]$.  This follows easily by induction on $t$, since by our choice of $t^{th}$ clue-bit we have $|K(w^{t})| \geq  .5 |K(w^{t-1})|  $.

Thus, $|K(w^{n-1})| > 2^0 = 1$.  On the other hand, $K(w^{n-1})$ is contained in a subcube of dimension 1, i.e., the set of Boolean strings agreeing with $w^{n-1}$.  Thus $|K(w^{n-1})| = 2$.  It follows that no matter Carmen's choice of final ``confusing clue'' $z$, the singleton $K(w^n) = w^n = b = b(\pi, z)$ must lie in $K$.  

Letting $b^{\oplus i}$ denote $b$ with its $i^{th}$ bit flipped, we conclude that the final index $\ell = \pi(n)$ must be such that $b$ and $b^{\oplus \ell}$ both lie in $K$.  It follows that each city index $\ell' $ in the ``suspect'' set $S = S(b)$ defines a distinct neighbor of $b$ within the induced subgraph of $H_n$ on $K$.  But by our assumption on $K$, there are at most $D$ such neighbors.  Thus we must always have $|S| \leq D$.  It follows that $\mathrm{compl}(\mathcal{G}_{CS, n}) \leq  \mathrm{compl}(\mathrm{STRAT}_B) \leq D$, as claimed.
\end{proof}

We note that in the strategy STRAT$_B$ described above, Bob's choice of clue-bit at step $t$ is fully determined by the string $w^{t - 1} \in \{0, 1, *\}^n$ describing the locations and values of the clue-bits he has already fixed.  Thus to prove the Hypercube Induced-Degree Conjecture, it suffices to lower-bound the complexity of Bob-strategies having this restricted form.  The same observation (toward proving the Sensitivity Conjecture) was made independently in~\cite{GKS_itcs}.

\section{An average-case version of the problem}

Now suppose we assume that Carmen's itinerary $\pi \in S_n$ is a uniformly distributed random variable, so that in particular, the value $\pi(n)$ is uniform over $[n]$.  Also assume that her final ``confusing clue'' bit $z$ is uniform and independent of $\pi$.  In this setting, it is tempting to expect that no matter which strategy STRAT$_B$ Bob uses for his clues, Alice will have significant expected uncertainty about $\pi(n)$, even after seeing the clue-string $b = b(\pi, z)$.

We conjectured that under this experiment, for any fixed STRAT$_B$ the conditional uncertainty of $\pi(n)$ satisfies $H( \pi(n) |  b  )  \geq c \log n$,
for some absolute constant $c > 0$, where $H(\cdot|\cdot)$ denotes the conditional Shannon entropy.  This conjecture would imply and strengthen the conjecture that $\mathrm{compl}(\mathcal{G}_{CS, n}) \geq n^{\Omega(1)}$, and until learning of the work of Gilmer, Kouck\'{y}, and Saks~\cite{GKS_itcs}, this author considered it likely. However, these researchers independently considered this entropic version of the conjecture, and managed to refute it!  They also showed that the worst-case complexity $\mathrm{compl}(\mathcal{G}_{CS, n})$ is sub-polynomial in $n$ in a model where Bob is allowed to leave clues from even the slightly larger alphabet $\Sigma = \{0, 1, 2\}$.  Their results indicate the delicate nature of the question, and cast some doubt on whether $\mathrm{compl}(\mathcal{G}_{CS, n}) \geq n^{\Omega(1)}$ in the Boolean model.  These authors do, however, raise an ``average-cost'' strengthening of this hypothesis that remains open (see~\cite{GKS_itcs}).

\section{The Carmen Sandiego search problem with query bounds}  
For the Carmen Sandiego game defined as before, we next consider a different model for Alice's behavior.   In this model, Alice doesn't have enough time or money to visit every city.  We model this formally by requiring that Alice has only \emph{query-bounded} access to the clue string $b = b(\pi, z) \in \{0, 1\}^n$ defined by $\pi, z$, and by the Bob-strategy STRAT$_B$.  Recall that for $0 < t \leq n$, a \emph{randomized $t$-query algorithm} for Alice (which we will denote as $R_A$) is a probability distribution over depth-$t$ decision trees $\{R_{A, r}\}_{r \in \Omega}$ on the $n$ input variables $b_1, \ldots, b_n$.  

In the query-bounded setting we also focus on a modified success criterion for Alice, in which she is no longer trying to give a short list of cities where Carmen might be hiding; now her only goal is to \emph{make a query} to the bit $b_{\pi(n)} = z$ planted by Carmen in the final city $c_{\pi(n)}$.  (In this variant of the model, we imagine that this is enough for Alice to detect and thwart Carmen.)  Formally, let VISITS$_A \subseteq [n]$ denote the random variable giving the $t$-subset of coordinates queried by Alice; we say that an execution of $R_A$ on $b = b(\pi, z)$ is \emph{search-successful} if $\pi(n) \in $ VISITS$_A$.  Note that here, there is no requirement for Alice to \emph{identify} which of her queries goes to the $\pi(n)^{th}$ coordinate (or even to decide if success has occurred).

We conjecture:

\begin{conjecture}\label{conj:catch_hard} There are $\alpha, \beta  > 0$ such that the following holds for any Bob-strategy STRAT$_B$ and any randomized algorithm $R_A$ for Alice making $t  \leq  \alpha \cdot n^\beta $ queries.
If $(\pi, z)$ are drawn uniformly from $S_n \times \{0, 1\}$, and $\mathbf{b} = b(\pi, z)$, then 
\[ \Pr_{\pi, z, R_A}[R_A \text{ is search-successful on }\mathbf{b}] \ < \ 1/3 \ . \]
\end{conjecture}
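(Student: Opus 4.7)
The plan is to invoke Yao's minimax principle so that it suffices to fix any Bob-strategy $\mathrm{STRAT}_B$ and any deterministic depth-$t$ decision tree $R_A$ with $t \leq \alpha n^\beta$, and show that the probability over uniform $(\pi, z) \in S_n \times \{0,1\}$ of hitting $\pi(n)$ among the queried positions is less than $1/3$. Writing the adaptive query indices as $\mathbf{i}_1, \ldots, \mathbf{i}_t$ and the posterior
$P_k(\cdot) := \Pr[\pi(n) = \cdot \mid b_{\mathbf{i}_1}, \ldots, b_{\mathbf{i}_k}, \mathbf{i}_1, \ldots, \mathbf{i}_{k+1}]$, I would aim to bound
\[ \Pr[\text{search-successful}] \ \leq \ \sum_{k=0}^{t-1} \mathbb{E}\bigl[P_k(\mathbf{i}_{k+1})\bigr] \]
by showing that each one-step ``hit probability'' is at most $n^{-\gamma}$ for an absolute $\gamma > 0$, so that the sum is below $1/3$ whenever $\beta < \gamma$.

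The technical heart would be to show that Bob's clues cannot sharpen Alice's posterior too quickly. One piece of structure immediately available is that $b_{\pi(n)} = z$ is a uniformly random and independent coin, so the planted coordinate contributes no information about $\pi(n)$; any narrowing must come from the $n-1$ bits Bob actually places, each of which depends only on a prefix of $\pi$. I would then attempt a coupling between the real experiment and a ``swapped'' experiment where Carmen's final city is replaced by a uniformly random other city, and try to upper bound the total variation between the two joint distributions of Alice's transcript. From such a coupling bound, one would extract a bound on how concentrated the posterior can become, and hence on $\mathbb{E}[P_k(\mathbf{i}_{k+1})]$; equivalently, one may phrase this in the Fourier/polynomial-method framework, viewing each indicator $q_i(b) := \mathbf{1}[\text{Alice queries }i\text{ on input }b]$ as a depth-$t$ junta-on-a-path and trying to bound its correlation with the event $\{\pi(n) = i\}$.

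The hard part, and the reason the statement is only a conjecture, is that Bob may cleverly encode information about $\pi(n)$ in the global pattern of clue bits: as noted in Section~3 above, the Shannon-entropy strengthening of the conjecture was refuted by Gilmer--Kouck\'{y}--Saks, so no uniform lower bound on $H(\pi(n) \mid b)$ is available, ruling out the most direct information-theoretic attack. A proof must therefore argue that even when $\pi(n)$ is heavily determined by $b$, the dependence is not of a form Alice can exploit in few queries---the ``determining'' bits must be scattered across many positions or require decoding a long transcript. A natural route is to combine Theorem~\ref{thm:cs_to_cfgs} with a decision-tree-to-structure lemma: any $t$-query search strategy succeeding with constant probability should yield, via the certifying sets $K(w^t)$ appearing in the proof of Theorem~\ref{thm:cs_to_cfgs}, a subset $K \subseteq H_n$ with $|K| > 2^{n-1}$ whose vertices have few ``short-depth certifying neighbors,'' ideally contradicting either the Hypercube Induced-Degree Conjecture or a quantitative refinement of it. Identifying the right strengthening of that conjecture that captures the query-bounded variant is, in my view, the principal obstacle, and the reason this is posed as a conjecture rather than a theorem.
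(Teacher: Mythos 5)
The statement you were asked to prove is Conjecture~\ref{conj:catch_hard}, which the paper explicitly poses as an \emph{open} conjecture and does not prove. The paper's only result about this statement is Theorem~\ref{thm:connect}, which shows that Conjecture~\ref{conj:catch_hard} would imply the Weak Parity Conjecture of Aaronson, Ambainis, Balodis, and Bavarian---that is, the paper gives a \emph{consequence} of the conjecture, not a proof of it. So there is no paper proof against which to compare your attempt, and indeed your proposal does not claim to be a proof: you correctly identify the core obstacle (the Gilmer--Kouck\'{y}--Saks refutation of the Shannon-entropy strengthening, which kills the most direct information-theoretic attack) and explicitly flag that the per-step bound $\mathbb{E}[P_k(\mathbf{i}_{k+1})] \leq n^{-\gamma}$ is exactly the missing ingredient, not something you can supply.

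Two remarks on your sketch. First, the decomposition $\Pr[\text{search-successful}] \leq \sum_{k=0}^{t-1}\mathbb{E}[P_k(\mathbf{i}_{k+1})]$ via Yao's principle is a fine framing, but be aware it is a genuine upper bound only with the right conditioning (you want to charge step $k{+}1$ only on the event that steps $1,\ldots,k$ missed, or simply take a union bound over $\mathbf{1}[\mathbf{i}_{k+1} = \pi(n)]$), so the inequality should be stated as a union bound over per-step hit indicators rather than a chain-rule identity. Second, the route you propose in your final paragraph---combining Theorem~\ref{thm:cs_to_cfgs} with a ``decision-tree-to-structure'' lemma to extract a set $K$ and thereby contradict the Hypercube Induced-Degree Conjecture---runs into a directionality issue. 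Both Theorem~\ref{thm:cs_to_cfgs} and Theorem~\ref{thm:connect} proceed from a given set $K \subseteq H_n$ with $|K| > 2^{n-1}$ to a \emph{specific, greedy} Bob-strategy, and that specific strategy is what makes the suspect set small and $\mathbf{b}$ land in $K$. Conjecture~\ref{conj:catch_hard}, by contrast, quantifies over \emph{all} Bob-strategies, including ones that do not arise from any hypercube set via the paper's construction, so one cannot simply reverse the paper's map to obtain the needed structural object. Any proof of the conjecture must account for arbitrary (possibly non-greedy, non-$K$-based) clue-placing rules, and identifying a structural invariant that survives this generality is, as you say, the principal obstacle.
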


By an argument similar to that of Theorem~\ref{thm:cs_to_cfgs}, we show that this conjecture's truth would imply another conjecture of Aaronson, Ambainis, Balodis, and Bavarian~\cite{AABB}, about the query complexity of computing the Parity function with high confidence on an arbitrary set of strictly more than half of all possible inputs.  

\begin{wp_conjecture} There are absolute constants $\alpha, \beta > 0$ such that the following holds for all $n > 0$.  Suppose $R$ is a randomized algorithm on $n$-bit Boolean input strings making $t \leq \alpha \cdot n^\beta$ queries to its inputs.   Let $K \subseteq \{0, 1\}^n$ be the set of inputs on which $A$ computes the Parity function PAR $=$ PAR$_n$ with at least $2/3$ success probability,
\[  K \ := \ \left\{ y:  \ \Pr[R(y) = PAR(y)] \ \geq \ 2/3 \right\}  \ . \]
Then we must have $|K| \leq 2^{n-1}$.
\end{wp_conjecture}

The authors of~\cite{AABB} show that the Hypercube Induced-Degree Conjecture implies the Weak Parity Conjecture.  Our definition and study of the Carmen Sandiego game was initially conceived in an attempt to prove their Weak Parity Conjecture, through the connection given in the next result.

\begin{theorem}\label{thm:connect} Let $0 < t \leq n$.  If $R$ is a $t$-query randomized algorithm on $n$ input bits, and $K \subseteq \{0, 1\}^n$ is a set of strictly more than $2^{n-1}$ inputs $y$ on which $\Pr[R(y) = PAR(y)] \geq 2/3$, then there is a choice of Bob-strategy STRAT$_B$ such that, for the Alice query strategy $R_A := R$, the following holds. If $(\pi, z)$ are drawn uniformly from $S_n \times \{0, 1\}$, and $\mathbf{b} = b(\pi, z)$, then we have
\[  \Pr[R_A(\mathbf{b})\text{ is search-successful}] \ \geq \ 1/3  \ .     \]
\end{theorem}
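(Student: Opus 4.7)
My plan is to re-use the Bob-strategy from the proof of Theorem~\ref{thm:cs_to_cfgs}, based on the same hypothesized set $K$, and then combine it with the standard parity-query argument (parity depends on every coordinate).

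First I would set STRAT$_B$ to be the greedy strategy from the proof of Theorem~\ref{thm:cs_to_cfgs}: at step $t$, Bob picks the clue-bit value $u \in \{0,1\}$ that maximizes $|K(w^{t-1}[\pi(t)\leftarrow u])|$. The analysis from that earlier proof gives us, verbatim, that for every $\pi \in S_n$ and every $z \in \{0,1\}$, the resulting string $\mathbf{b} = b(\pi, z)$ lies in $K$. Therefore, by our assumption on $K$, for every such $(\pi, z)$,
\[
  \Pr_{R}[R(\mathbf{b}) = \mathrm{PAR}(\mathbf{b})] \ \geq \ 2/3.
\]
Averaging over $(\pi, z)$ uniform in $S_n \times \{0,1\}$, the same bound holds unconditionally.

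Next I would exploit the key structural fact: for a fixed $\pi$, the two strings $b(\pi, 0)$ and $b(\pi, 1)$ agree on every coordinate except $\pi(n)$, and therefore have opposite parities. Fix any outcome $r$ of $R$'s internal randomness, yielding a deterministic depth-$t$ decision tree $R_r$. If $R_r$'s execution on $b(\pi, z)$ does not query coordinate $\pi(n)$, then since the queried coordinates agree across $b(\pi, 0)$ and $b(\pi, 1)$, the executions are identical on both inputs and $R_r$ outputs the same bit in both cases---but the two correct parity answers differ, so $R_r$ is correct on at most one of $b(\pi, 0), b(\pi, 1)$. Thus, letting $Q(\pi, r) \in \{0,1\}$ be the indicator that $R_r$ queries coordinate $\pi(n)$ (a value independent of $z$, by the same agreement argument), we get
\[
  \Pr_{z}[R_r(b(\pi, z)) = \mathrm{PAR}(b(\pi, z))] \ \leq \ Q(\pi, r) + \tfrac{1}{2}(1 - Q(\pi, r)) \ = \ \tfrac{1}{2} + \tfrac{1}{2} Q(\pi, r).
\]

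Finally I would take expectations over $\pi$ and $r$. The left side averages to at least $2/3$ by the previous paragraph, while the right side averages to $\tfrac{1}{2} + \tfrac{1}{2}\,\Pr_{\pi, r}[Q(\pi, r) = 1]$. But the event $\{Q(\pi, r) = 1\}$ is exactly the event that $R_A = R$ is search-successful on $\mathbf{b} = b(\pi, z)$, so rearranging yields the desired bound $\Pr[R_A(\mathbf{b}) \text{ is search-successful}] \geq 1/3$. The whole argument is short; the only ``idea'' is noticing that the very same greedy Bob-strategy used for Theorem~\ref{thm:cs_to_cfgs} guarantees $\mathbf{b} \in K$ regardless of Carmen's final clue, after which the well-known parity-query obstruction does all the work---so I do not foresee a genuine obstacle, only the minor care needed to observe that $Q(\pi, r)$ is well-defined independently of $z$.
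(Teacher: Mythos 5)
Your proposal is correct and follows essentially the same route as the paper's proof: re-use the greedy Bob-strategy from Theorem~\ref{thm:cs_to_cfgs} to guarantee $\mathbf{b} \in K$, then argue that a decision tree which never queries $\pi(n)$ cannot distinguish $b(\pi,0)$ from $b(\pi,1)$ and hence is correct on at most one of the two. The paper phrases this via the coupled pair $(\mathbf{b},\mathbf{b}')$ and the indicator inequality $\mathbf{1}[R(\mathbf{b})=PAR(\mathbf{b})] + \mathbf{1}[R(\mathbf{b}')=PAR(\mathbf{b}')] \le 1 + \mathbf{1}[\text{search-successful}]$, whereas you average directly over $z$ for fixed $(\pi,r)$; these are the same computation.
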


As a consequence of Theorem~\ref{thm:connect}, we see that Conjecture~\ref{conj:catch_hard} implies the Weak Parity Conjecture.  This work also has a simple takeaway message which can be studied even without talk of Alice, Bob, and Carmen: for any set $K \subseteq \{0, 1\}^n$ of more than half of all strings, we propose the random variable $\mathbf{b} = b(\pi, z)$ used in the proof of Theorem~\ref{thm:connect} as a candidate hard distribution (supported entirely on $K$) for query-bounded algorithms attempting to compute the Parity function.  This seems to us a promising approach to the Weak Parity Conjecture.

\begin{proof}[Proof of Theorem~\ref{thm:connect}]  We use the same Bob-strategy STRAT$_B$ (defined relative to the set $K$) as in the proof of Theorem~\ref{thm:cs_to_cfgs}.  Letting the random variable $\mathbf{b} = b(\pi, z)$ be as above, define the modified string $\mathbf{b}' := b(\pi, \overline{z})$ with flipped final clue-bit $\overline{z}$.  Now $\mathbf{b}$ and $\mathbf{b}'$ are identically distributed (since $z$ is uniform and independent of $\pi$), so considering the Boolean output of the query algorithm $R$, we have 
\begin{equation}\label{eq:11} \Pr[R(\mathbf{b})= PAR(\mathbf{b})]  \ = \  \Pr[R(\mathbf{b}')= PAR(\mathbf{b}')]  \ . \end{equation}
Also, for a fixed choice of randomness $r$ of $R$ (determining a $t$-query decision tree $R_r$ applied to the input), the execution of Alice's query algorithm $R_{r}(\mathbf{b})$ is search-successful if and only if $R_{r}(\mathbf{b}') $ is search-successful.

On the other hand, if $R_{r}(\mathbf{b})$ is not search-successful, then its view of $\mathbf{b}$ is identical to that of $\mathbf{b}'$ in the execution $R_r(\mathbf{b}')$, since $\mathbf{b}, \mathbf{b}'$ are identical outside of the $\pi(n)^{th}$ position.  Thus the Boolean outputs of these two computations are identical.  It also clearly holds that $PAR(\mathbf{b}) \neq PAR(\mathbf{b}')$, so at most one of the computations $R_r(\mathbf{b}), R_r(\mathbf{b}')$ can correctly output the Parity function of its input unless $R_{r}(\mathbf{b})$ is search-successful.  Thus in terms of indicator random variables, we have
\begin{equation}\label{eq:22}    \mathbf{1}[R(\mathbf{b})= PAR(\mathbf{b})] \ + \   \mathbf{1}[R(\mathbf{b}')= PAR(\mathbf{b}')]  \ \leq \ 1 +  \mathbf{1}[ R(\mathbf{b})\text{ is search-successful}] \ .   \end{equation}
Using Eqs.~(\ref{eq:11}) and~(\ref{eq:22}), we find that 
\begin{align*}   \Pr[R(\mathbf{b})= PAR(\mathbf{b})] \ &= \ .5\left( \ \bE\left[  \mathbf{1}[R(\mathbf{b})= PAR(\mathbf{b})] \ + \   \mathbf{1}[R(\mathbf{b}')= PAR(\mathbf{b}')]  \right] \  \right) \\ &\leq \ .5(1 + \Pr[R(\mathbf{b})\text{ is search-successful}]) \ .   \end{align*}
Now $[\mathbf{b} \in K]$ always holds, by the design of STRAT$_B$ and our analysis from the proof of Theorem~\ref{thm:cs_to_cfgs} (using the fact that $|K| > 2^{n-1}$).  By our other initial assumption on $K$ in the present context, we then have $ \Pr[R(\mathbf{b})= PAR(\mathbf{b})] \geq 2/3$.  Combining and rearranging shows that $\Pr[R_A(\mathbf{b})\text{ is search-successful}]) \geq 1/3$ as needed, proving Theorem~\ref{thm:connect}.
\end{proof}

\section*{Acknowledgements}
I thank Laci Babai for encouragement to produce this note.  I thank Scott Aaronson for sharing an early draft of~\cite{AABB} and for many interesting conversations about the Sensitivity Conjecture and related topics.

\end{document}